  \renewcommand{\Pr}{\mbox{\rm Pr}}
  \newcommand{\C}{\mathbb{C}} 
  \newcommand{\F}{\mathbb{F}} 
  \newcommand{\st}{:\,} 
  \newcommand{\eps}{\varepsilon}
  \newcommand{\beq}{\begin{equation}}
  \newcommand{\eeq}{\end{equation}}
  \newcommand{\beqn}{\begin{equation*}}
  \newcommand{\eeqn}{\end{equation*}}
  \newcommand{\beqr}{\begin{eqnarray}}
  \newcommand{\eeqr}{\end{eqnarray}}
  \newcommand{\beqrn}{\begin{eqnarray*}}
  \newcommand{\eeqrn}{\end{eqnarray*}}
  \newcommand{\bmline}{\begin{multline}}
  \newcommand{\emline}{\end{multline}}
  \newcommand{\bmlinen}{\begin{multline*}}
  \newcommand{\emlinen}{\end{multline*}}
  \theoremstyle{plain}
  \newtheorem{theorem}{Theorem}
  \newtheorem{claim}[theorem]{Claim}
  \theoremstyle{definition}
  \newtheorem{definition}[theorem]{Definition}
  \theoremstyle{remark}
  \newtheorem{remark}[theorem]{Remark}
  \renewenvironment{proof}[1][]{
    	\begin{trivlist}
     	\item[\hspace{\labelsep}{\em\noindent Proof#1:\/}]}
     	{{\hfill$\Box$}
    	\end{trivlist}
  }
  \newtheorem*{rep@theorem}{\rep@title}
  \newcommand{\newreptheorem}[2]{%
  \newenvironment{rep#1}[1]{%
  \def\rep@title{#2 \ref{##1}}%
  \begin{rep@theorem}}%
  {\end{rep@theorem}}}
\begin{document}\title{Improved Lower Bounds for the Restricted Isometry Property of Subsampled Fourier Matrices}
\author{
       Shravas Rao}
\thanks{Supported by the Simons Collaboration on Algorithms and
Geometry}
\address{Courant Institute, New York University, 251 Mercer Street, New York NY 10012, USA}
\email{rao@cims.nyu.edu}
\date{\today}

\maketitle
\begin{abstract}
Let $A$ be an $N \times N$ Fourier matrix over $\F_p^{\log{N}/\log{p}}$ for some prime $p$.
We improve upon known lower bounds for the number of rows of $A$ that must be sampled so that the resulting matrix $M$ satisfies the restricted isometry property for $k$-sparse vectors.
This property states that $\|Mv\|_2^2$ is approximately $\|v\|_2^2$ for all $k$-sparse vectors $v$.
In particular, if $k = \Omega( \log^2{N})$, we show that $\Omega(k\log{k}\log{N}/\log{p})$ rows must be sampled to satisfy the restricted isometry property with constant probability.
\end{abstract}

\section{Introduction}

We say that the matrix $M \in \C^{q \times N}$ satisfies the $(k, \eps)$-restricted isometry property if for all vectors $v \in \C^{N}$ with at most $k$ non-zero entries,
\[
(1-\eps)\|v\|_2^2 \leq \|M v\|_2^2 \leq (1+\eps)\|v\|_2^2.
\]
This notion is due to Cand\`es and Tao~\cite{CT05} and has found many applications, especially in the field of compressed sensing~\cite{C08}.
One goal is to construct such matrices $M$ with as few rows as possible.

It is well known that if the entries of $M$ are chosen independently at random, then $q = \Theta(k \log{N/k})$ rows are both necessary~\cite{FPRU10} and sufficient~\cite{CT06, BDDW08, MPTJ08} to obtain a matrix that satisfies the restricted isometry property.
Recently, much attention has been placed on matrices obtained by sampling from the rows of the Fourier matrices.
In particular if the vector-space is $\F_p^{\log{N}/\log{p}}$ for some prime $p$, this is the matrix defined by $A_{i, j} = \omega^{\langle i, j\rangle}$ where $\omega$ is the $p$th root of unity.
In the case that $p = 2$, the corresponding matrix is sometimes referred to in the literature as a Hadamard matrix.
Such matrices have the advantage that matrix vector multiplication can be performed quickly.
However, because the entries are now highly dependent on each other, it is not clear if one can still sample a small number of rows and obtain matrices that satisfy the restricted isometry property.

Cand\`es and Tao gave the first bound of $O(k \log^6{N})$ rows being sufficient to satisfy the restricted isometry property~\cite{CT06}.
This bound was improved by Rudelson and Vershynin~\cite{RV08} to $O(k \log^2{k} \log{(k\log{N})} \log{N})$, by  Cheraghchi, Guruswami, and Velingker~\cite{CGV13} to $O(k \log^3{k}\log{N})$, and by Haviv and Regev~\cite{HR17} to $O(k \log^2{k} \log{N})$.
Additionally, a bound of $O(k \log{k} \log^2{N})$ was proved by Bourgain~\cite{B14}.

On the other side, a lower bound of $\Omega(k \log N)$ rows was shown by Bandeira, Lewis, and Mixon~\cite{BLM18}.
In this work we obtain the following lower bound.

\begin{theorem}\label{lem:maininformal}
Let $A$ be the $N \times N$ matrix defined by $A_{i, j} = \omega^{\langle i, j\rangle}$ for $i, j \in \F_p^{\log_p{N}}$ where $\omega$ is the $p$th root of unity.
For some $q = O(k\log{k}\log{N}/\log{p})$, let $M \in \C^{q \times N}$ be a matrix whose $q$ rows are sampled uniformly and independently at random from the rows of $A$.
If $k = \Omega(\log^2{N})$ and $k \leq N^{1/3}$, then $M$ does not satisfy the $(k, \eps)$-restricted isometry property with constant probability for any $\eps$.
\end{theorem}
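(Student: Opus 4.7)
The plan is to construct a family of $k$-sparse ``subspace indicator'' test vectors and show that, when $q$ is below the claimed bound, the sampled matrix $M$ fails to be an approximate isometry on at least one of them. Assume for simplicity that $k = p^d$ with $d = \log_p k$, and write $n = \log_p N$. For each $d$-dimensional subspace $V \leq \F_p^n$ let $v_V := \1_V \in \C^N$, a $k$-sparse vector with $\|v_V\|_2^2 = k$. Using the appropriate normalization $M = \tfrac{1}{\sqrt q}A_S$, where $S$ is the multiset of sampled rows, the character identity $\sum_{j\in V}\omega^{\langle r,j\rangle} = k\cdot \1[r\in V^{\perp}]$ yields
\[
\|M v_V\|_2^2 \;=\; \frac{k^2}{q}\,\bigl|S \cap V^{\perp}\bigr|.
\]
Since $|V^{\perp}| = N/k$, the random variable $X_V := |S\cap V^{\perp}|$ is $\mathrm{Bin}(q,1/k)$-distributed with mean $\mu := q/k$, and $(k,\eps)$-RIP forces $X_V \in ((1-\eps)\mu,(1+\eps)\mu)$ for every such $V$. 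It suffices to exhibit, with constant probability, a single $V$ for which $X_V \geq 2\mu$ (which violates RIP for any $\eps < 1$).

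For the single-subspace anti-concentration I would compute $\Pr[X_V = 2\mu]$ via Stirling. The hypothesis $k \geq \log^2 N$ gives $\mu = \omega(1)$, so that the estimate
\[
\Pr[X_V = 2\mu] \;\approx\; \frac{(e/2)^{2\mu}\,e^{-\mu}}{\sqrt{4\pi\mu}}
\]
is valid and gives $\Pr[X_V\geq 2\mu] \geq \exp(-C\mu)$ for an absolute constant $C < 1$. The number of $d$-dimensional subspaces of $\F_p^n$ is $T = p^{\Theta(d(n-d))}$, so
\[
\log T \;=\; \Theta\!\Bigl(\tfrac{\log k \cdot \log(N/k)}{\log p}\Bigr) \;=\; \Theta\!\Bigl(\tfrac{\log k \cdot \log N}{\log p}\Bigr),
\]
using $k \leq N^{1/3}$. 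Picking $q = c\,k\log k\log N/\log p$ with $c > 0$ a sufficiently small absolute constant ensures $C\mu \leq \tfrac12 \log T$, whence $\Pr[X_V \geq 2\mu] \geq T^{-1/2}$; this is precisely where the target bound appears.

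Next I would convert this positive expectation into a positive probability via the second moment method. Put $B := |\{V : X_V \geq 2\mu\}|$, so that $\E[B] \geq T^{1/2}$ and
\[
\E[B^2] \;=\; \sum_{V_1,V_2} \Pr\!\bigl[X_{V_1}\geq 2\mu\ \text{and}\ X_{V_2}\geq 2\mu\bigr].
\]
The structural fact is that whenever $V_1 \cap V_2 = \{0\}$, $|V_1^{\perp}\cap V_2^{\perp}|/N = 1/k^2 = \Pr[r\in V_1^{\perp}]\Pr[r\in V_2^{\perp}]$ exactly, so the indicators $\1[r\in V_1^{\perp}]$ and $\1[r\in V_2^{\perp}]$ are independent for every single sample, and hence $X_{V_1}$ and $X_{V_2}$ are \emph{independent}. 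The condition $k \leq N^{1/3}$ guarantees $2d \leq n$, so a generic pair of $d$-subspaces intersects trivially and contributes $\Pr[X_V\geq 2\mu]^2$ to $\E[B^2]$. Stratifying the intersecting pairs by $t := \dim(V_1\cap V_2)\geq 1$ and using the standard count of pairs of $d$-subspaces sharing a given $t$-dimensional intersection, the correction term is lower-order. Paley--Zygmund (equivalently, Chebyshev on $B$) then gives $\Pr[B\geq 1] = \Omega(1)$, completing the proof.

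The main obstacle will be the second-moment bookkeeping over intersecting pairs: the joint tail $\Pr[X_{V_1}, X_{V_2} \geq 2\mu]$ grows with $t$ because shared directions in $V_1^{\perp}\cap V_2^{\perp}$ force samples into both sets simultaneously, and this growth must be compared, dimension by dimension, with the decrease in the number of pairs at intersection dimension $t$ (a factor of roughly $p^{-t(n-2d+t)}$ relative to disjoint pairs). The two hypotheses $k \geq \log^2 N$ (needed for the Stirling anti-concentration to be meaningful) and $k \leq N^{1/3}$ (needed so that $n - 2d$ is not too small) are precisely what makes this comparison go through; outside these regimes one would have to restrict to a carefully chosen pairwise-disjoint sub-family of subspaces and redo the count there.
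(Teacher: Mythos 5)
Your violating event is genuinely different from the paper's: you break the \emph{upper} RIP inequality by over-concentrating samples on the coset $V^{\perp}$ (forcing $X_V\ge 2\mu$), whereas the paper breaks the \emph{lower} inequality by showing that some coset $w+V^{\perp}$ receives \emph{no} samples, which places the $k$-sparse character $j\mapsto\omega^{-\langle w,j\rangle}\1[j\in V]$ in the kernel of $M$ (Claim~\ref{claim:rowsequal}; a coupon-collector failure). The kernel formulation is normalization-free, while your conclusion ``$X_V\ge2\mu$ refutes RIP for all $\eps<1$'' is tied to the specific scaling $M=\tfrac{1}{\sqrt q}A_S$ and would need a small patch (e.g.\ exhibiting $V,V'$ with $X_V\ge 2X_{V'}>0$) to rule out every scaling. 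Structurally, though, the two arguments are parallel: both are second-moment/Bonferroni arguments over the family of $\log_p k$-dimensional subspaces, both use that transversal pairs give exactly independent events, and both need the subspace family to carry $\Theta(\log k\log N/\log p)$ bits of entropy so that $\mu=q/k$ can be pushed up to $\Theta(\log k\log N/\log p)$. Your single-subspace estimate and the count $\log T=\Theta(\log k\log N/\log p)$ are correct.

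The gap is precisely the step you defer, and your proposed fallback does not close it. The stratified second moment over intersecting pairs (comparing, for each $t=\dim(V_1\cap V_2)\ge1$, the inflation of the joint tail against the $p^{-t(n-2d+t)}$ rarity of such pairs) is the technical crux of the whole theorem, not lower-order bookkeeping; it is exactly what the paper's Claims~\ref{claim:intprob} and~\ref{claim:manysub} are for. Moreover, retreating to a ``pairwise-disjoint sub-family'' is quantitatively fatal: $d$-dimensional subspaces of $\F_p^n$ with pairwise trivial intersection partition their nonzero vectors, so any such family has size at most $(p^n-1)/(p^d-1)<2N/k$, whose logarithm is $O(\log N)$ rather than $\Omega(\log k\log N/\log p)$. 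Running your union/second-moment argument over that family caps $C\mu$ at $O(\log N)$, i.e.\ $q=O(k\log N)$, which merely recovers the Bandeira--Lewis--Mixon bound and loses the $\log k$ factor that is the point of the theorem. The paper's resolution is to allow intersections of dimension up to $\log_p k/2$ --- which admits $(N/(2k^3))^{\log_p k/2+1}$ subspaces, enough entropy --- and then to prove (Claim~\ref{claim:intprob}) that such pairs are still nearly independent, the joint probability exceeding the product of the marginals by only $\exp(O(|Q|/k^{3/2}))$. You must either carry out the full stratification over all $1\le t\le d$ or adopt an intermediate family of this kind; as written, the proof is incomplete at its decisive step.
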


A more technical version of Theorem~\ref{lem:maininformal} can be found in Section~\ref{sec:main}, that also discusses the case of general $k$.

The techniques used to prove Theorem~\ref{lem:maininformal} are similar to the techniques used in~\cite{BLM18}.
The authors of~\cite{BLM18} noticed an instance of the coupon collecting problem embedded in the problem of whether or not a subsampled Fourier matrix satisfies the restricted isometry property.
In this paper, we find many instances of the coupon collecting problem and combine them to obtain a stronger lower bound.

\subsection{Related Work}

Jaros{\l}aw B{\l}asiok, Patrick Lopatto, Kyle Luh, and Jake Marcinek have simultaneously and independently proved a similar result and we refer the reader to their forthcoming preprint for the details.  

\section{Notation}

Unless otherwise noted, we let $\log{x}$ denote the natural logarithm.

Let $A$ be the $N \times N$ matrix whose entries are defined by $A_{i, j} = \omega^{\langle i, j\rangle}$ where $i$ and $j$ are vectors in $\F_p^{\log_p{N}}$ and $\omega$ is the $p$th root of unity.

If $T$ and $S$ are sequences with elements from $[N]$, then we let $A_{T, \cdot}$ be the matrix whose rows are from $T$, $A_{\cdot, S}$ be the matrix whose columns are from $S$, and $A_{T, S} = (A_{T, \cdot})_{\cdot, S} = (A_{\cdot, S})_{\cdot, T}$.


Finally, we define the concept of \emph{shattering} that will be used in the proof of Theorem~\ref{lem:maininformal}.

\begin{definition}
Let $V \subseteq \F_p^{\log_p{N}}$ be a subspace and let $P_V: \F_p^{\log_p{N}} \rightarrow V$ be the projection of $\F_p^{\log_p{N}}$ onto $V$.
We say that a sequence $Q$ whose elements are from the set $\F_P^{\log_p{N}}$ \emph{shatters} a subspace $V$ if
\[
P_V(Q) := \{P_V(r) \st r \in Q\}
\]
is equal to $V$.
\end{definition}

\section{Proof of Main Result}\label{sec:main}

%
%
%
%
%
%

We prove the following bound, which is the main result of this paper.
This statement is a more technical version of Theorem~\ref{lem:maininformal}.

\begin{theorem}\label{lem:main}
Let $A$ be the $N \times N$ matrix defined by $A_{i, j} = \omega^{\langle i, j\rangle}$ for $i, j \in \F_p^{\log_p{N}}$ where $\omega$ is the $p$th root of unity, and let $r_1, \ldots, r_{|Q|}$ be independent and identically distributed random variables uniform over $[N]$ so that $Q = (r_1, \ldots, r_{|Q|})$ is a sequence.
Then there exists a universal constant $C < 1$ such that if $k \leq N/2$
\[
|Q| \leq (k+\sqrt{k})((\log_p{k}/2+1)(\log{N}-3\log{2k})+\log{k}),
\] the matrix $A_{Q, \cdot}$ does not satisfy the restricted isometry property with probability at least 
\[
C\exp\left(-|Q|\frac{2\sqrt{k}}{k^2-k}-1\right) \geq \frac{C}{e}\exp\left(\frac{-2-2\log{k}}{\sqrt{k}-1}\right)\left(\frac{N}{2k^3}\right)^{-2(\log_p{k/2}+1)/(\sqrt{k}-1)}
\]
\end{theorem}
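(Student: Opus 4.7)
The plan is to show that for $|Q|$ in the claimed range, the sequence $Q$ fails to shatter some $d$-dimensional subspace $V \subseteq \F_p^{\log_p N}$ with $d$ chosen near $\log_p(k)/2 + 1$, and then deduce failure of the restricted isometry property. The connection to RIP is direct: each row of $A_{Q, V}$ is a character of $V$ evaluated at $P_V(r)$, so $\mathrm{rank}(A_{Q, V}) = |P_V(Q)|$. Non-shattering $|P_V(Q)| < |V|$ therefore produces a nonzero vector in the kernel of $A_{Q, V}$, which extends to a $|V|$-sparse (and, provided $p^d \leq k$, also $k$-sparse) vector $u$ with $A_{Q, \cdot} u = 0$; such a $u$ violates the lower bound $(1-\eps)\|u\|_2^2 \leq \|A_{Q, \cdot} u\|_2^2$ for every $\eps < 1$. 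So it suffices to lower bound the probability that some $d$-dimensional $V$ is not shattered.

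To that end, I would reformulate non-shattering as a coset-avoidance event: $V$ is not shattered iff there exists $v \in V$ with $Q \cap (v + V^\perp) = \emptyset$, i.e., $Q$ misses one of the $p^d$ affine cosets of $V^\perp$. Each such coset has $p^{\log_p N - d}$ elements and is missed with probability $(1 - p^{-d})^{|Q|}$. Let $Y$ count the pairs $(V, v)$ with $V$ a $d$-dimensional subspace, $v \in V$, and $Q \cap (v + V^\perp) = \emptyset$; the event $\{Y \geq 1\}$ is exactly the event that some $V$ is not shattered. The first moment
\[
\Exp[Y] = \binom{\log_p N}{d}_p \cdot p^d \cdot (1 - p^{-d})^{|Q|}
\]
is made large by the Gaussian binomial estimate $\binom{\log_p N}{d}_p \gtrsim p^{d(\log_p N - d)}$ together with the theorem's upper bound on $|Q|$; the factor $\log N - 3 \log 2k$ inside the theorem's bracket arises from carefully tracking the error terms in this estimate.

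Next I would apply the Paley-Zygmund inequality $\Pr[Y \geq 1] \geq \Exp[Y]^2 / \Exp[Y^2]$. Writing $\Exp[Y^2]$ as a sum over ordered pairs $(V_1, v_1), (V_2, v_2)$, the joint avoidance probability is $(1 - |W_1 \cup W_2|/N)^{|Q|}$ with $W_i = v_i + V_i^\perp$, and $|W_1 \cup W_2|$ is controlled by $e := \dim(V_1 \cap V_2)$. Stratifying pairs by $e$, using Gaussian binomial identities to count each stratum, and summing the contributions, the goal is to show $\Exp[Y^2] \lesssim \Exp[Y]^2 + \Exp[Y]$, which via Paley-Zygmund yields the promised bound $\Pr[Y \geq 1] \geq C \exp(-|Q| \cdot 2\sqrt k / (k^2 - k) - 1)$ after substituting the explicit value of $\Exp[Y]$.

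The principal obstacle is the second-moment estimate. Pairs of subspaces with large intersection produce cosets $W_1, W_2$ that overlap significantly, making their avoidance events highly positively correlated; summed naively these contributions swamp $\Exp[Y]^2$ by factors of $\log k / \log p$ or more, and the bound degenerates to the $\Omega(k \log N)$ of Bandeira-Lewis-Mixon. Managing the stratified correlation sum tightly enough to preserve the $\log k / \log p$ improvement is where the technical heart of the proof lies, and the prefactor $k + \sqrt k$ as well as the precise exponential form of the failure probability reflect the trade-offs involved in this computation.
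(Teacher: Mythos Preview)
Your choice of subspace dimension $d \approx \log_p(k)/2 + 1$ is a genuine error that breaks the argument. With $|V| = p^d \approx p\sqrt{k}$, each coset-avoidance event has probability $(1 - p^{-d})^{|Q|} \approx \exp(-|Q|/(p\sqrt{k}))$, so even after multiplying by the full Gaussian binomial count you obtain $\Exp[Y] \gg 1$ only for $|Q] = O(\sqrt{k}\cdot\text{polylog})$, a factor $\sqrt{k}$ short of the theorem's threshold. The quantity $\log_p(k)/2 + 1$ appearing in the statement is not a subspace dimension; it is an exponent arising from the \emph{number} of usable subspaces. The correct choice is $d = \log_p k$, so that $|V| = k$ and the coupon-collector decay is $e^{-|Q|/k}$; likewise the factor $\log N - 3\log(2k)$ does not come from error terms in the Gaussian binomial but from the size of the family described below.

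With $d$ corrected, your strategy and the paper's still differ. The paper does not run a second moment over all $(\log_p k)$-dimensional subspaces. Instead it first constructs, by a counting argument, a family of $\ell = (N/2k^3)^{\log_p(k)/2 + 1}$ such subspaces with the property that every two of them intersect in dimension at most $\log_p(k)/2$, and then applies two-term Bonferroni over a suitably sized subfamily. Because all pairwise intersections are uniformly small ($p^m \le \sqrt{k}$), a single estimate controls every pair term, and the obstacle you correctly flag (large-intersection pairs overwhelming the second moment) is eliminated by construction rather than by a stratified sum. The prefactor $k+\sqrt{k}$ and the exponent $\log_p(k)/2+1$ in the theorem are direct artifacts of this design: $\sqrt{k}$ is the size of the largest allowed intersection, and the exponent is the size of the constructed family. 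Your Paley--Zygmund route over all subspaces might still succeed once $d=\log_p k$, but you would have to show that the strata with $e > \log_p(k)/2$ contribute $o(\Exp[Y]^2)$ to $\Exp[Y^2]$; you identify this as the technical heart and leave it undone, and it is not clear it goes through without losing the extra $\log k$ you are after.
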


We start by proving the following necessary property for $A_{Q, \cdot}$ to satisfy the restricted isometry property.
\begin{claim}\label{claim:rowsequal}
Let $V \subseteq \F_p^{\log_p{N}}$ be a subspace with dimension $\log_p{k}$.
Then if $A_{Q, \cdot}$ satisfies the restricted isometry property, then $Q$ shatters $V$.
\end{claim}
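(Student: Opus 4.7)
The plan is to prove the contrapositive: if $Q$ fails to shatter $V$, I will exhibit a nonzero $k$-sparse vector in the kernel of $A_{Q, \cdot}$, which contradicts $(k, \eps)$-RIP for any $\eps < 1$.

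The construction is a family of ``modulated indicators.'' For any $(\log_p k)$-dimensional subspace $V' \subseteq \F_p^{\log_p N}$ (so $|V'|=k$) and any $u \in \F_p^{\log_p N}$, consider the $k$-sparse vector $v_{V',u}$ supported on $V'$ with $v_{V',u}(s) = \omega^{-\langle u, s\rangle}$. By character orthogonality,
\[
(A v_{V',u})_r = \sum_{s \in V'} \omega^{\langle r - u, s\rangle} = \begin{cases} k, & r - u \in (V')^\perp, \\ 0, & \text{otherwise,} \end{cases}
\]
where $(V')^\perp := \{x \in \F_p^{\log_p N} : \langle x, s\rangle = 0 \text{ for all } s \in V'\}$. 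Hence $A_{Q,\cdot} v_{V',u} = 0$ exactly when $Q$ misses the coset $u + (V')^\perp$; RIP rules out any nonzero $k$-sparse element of the kernel, so $Q$ must hit every coset of $(V')^\perp$ for every such $V'$.

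Since $\perp$ is a bijection on subspaces of $\F_p^{\log_p N}$, this forces $Q$ to intersect every coset of every subspace $U$ of dimension $\log_p N - \log_p k$. I will then apply this to $U = \ker(P_V)$, a complement of $V$ of the correct dimension: because $P_V$ induces an isomorphism $\F_p^{\log_p N}/\ker(P_V) \cong V$, the condition ``$Q$ hits every coset of $\ker(P_V)$'' is precisely $P_V(Q) = V$. The main conceptual step is spotting the modulated witnesses $v_{V',u}$: the $-u$-modulation is chosen so that the character sum over $V'$ isolates a single coset of $(V')^\perp$, converting the linear-independence question for columns of $A_{Q, \cdot}$ into a combinatorial coset-covering statement that dualizes into shattering.
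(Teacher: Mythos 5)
Your proof is correct and follows essentially the same route as the paper: both argue the contrapositive by producing a nonzero $k$-sparse vector, supported on a $k$-element subspace, that lies in the kernel of $A_{Q,\cdot}$ whenever $Q$ misses a coset. The paper obtains this vector non-constructively from a rank count on the submatrix $A_{Q,V}$, whereas you exhibit it explicitly as a modulated indicator via character orthogonality; your dualization through $(V')^\perp = \ker(P_V)$ is just a more careful version of the paper's identification of the row-types of $A_{\cdot,V}$ with the fibers of $P_V$.
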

\begin{proof}
Let $v_1, \ldots, v_{\log_p{k}}$ be a basis for $V$.
Note that all columns of $A_{\cdot, V}$ can be determined from the columns $A_{\cdot, \{v_1\}}, \ldots, A_{\cdot, \{v_{\log_p{k}}\}}$ and thus, $A_{\cdot, V}$ contains $k$ distinct rows.
Additionally, $A_{\cdot, V}$ has rank $k$ as the matrix $A$ itself is an isometry.
To see this, for any vector $v$, let $v_{V}$ be the restriction of $v$ to the coordinates in $V$.
If $v$ is a $k$-sparse vector that is non-zero only in the coordinates corresponding to $V$, and $v$ is non-zero, then $A v$ is equal to $A_{\cdot, V} v_{V}$ and both are non-zero.
Thus, if the set of rows of $A_{Q, V}$ is not equal to the set of rows of $A_{\cdot, V}$, it does not have full rank, and there exists a $k$-sparse vector $v$ for which $A_{Q, \cdot}(v) = 0$ and $A_{Q, \cdot}$ does not satisfy the restricted isometry property.


Thus, for every $w \in \F_p^{\log_p{k}}$, there must exist an $r \in Q$ such that $A_{\{r\}, V} = A_{\{w\}, V}$.
Let $P_V: \F_p^{\log_p{N}} \rightarrow V$ be the projection of $\F_p^{\log_p{N}}$ onto $V$.
Then $P_V(r) = P_V(w)$, and thus $\{P_V(r) \st r \in Q\}$ is equal to $V$ as desired.

\end{proof}

To prove Theorem~\ref{lem:main} we fix many subspaces of $\F_p^{\log_p{N}}$ of dimension $\log_p{k}$ and analyze the probability that $Q$ does not shatter these subspaces.
In particular, we will fix these subspaces so that the event that $Q$ does not shatter any given subspace is close to being independent of the event that $Q$ does not shatter any other given subspace.
This allows us to fix enough subspaces such that the probability that $Q$ does not shatter any subspaces is high.

The probability that $Q$ does not shatter $V$ for a subspace $V$ of dimension $\log_p{k}$ can be analyzed as in the coupon collector's problem.
This probability is approximately $k(1-1/k)^{|Q|} \approx \exp(-(|Q|-\log{k})/k)$, but we will use the following lower bound.

\begin{claim}\label{claim:indprob}
Let $V \subseteq \F_p^{\log_p{N}}$ be a subspace of dimension $\log_p{k}$ and let $P_V: \F_p^{\log_p{N}} \rightarrow V$ be the projection of $\F_p^{\log_p{N}}$ onto $V$.
Let $Q = (r_1, \ldots, r_{|Q|})$ be a random sequence where $r_1, \ldots, r_{|Q|}$ are independent and identically distributed random variables uniform over $\F_p^{\log_p{N}}$.
Then
\[
\Pr\left[P_V(Q) \neq V\right] \geq
 k\left(1-\frac{1}{k}\right)^{|Q|}\left(1-k\left(1-\frac{1}{k-1}\right)^{|Q|}\right)
\]
as desired.
\end{claim}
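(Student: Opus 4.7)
The plan is to view the event $P_V(Q) \neq V$ as a coupon-collector miss on a $k$-element set and then apply a truncated inclusion-exclusion (second Bonferroni) lower bound.

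First I would reduce to the standard coupon-collector setup. Because $V$ is a subspace and $P_V : \F_p^{\log_p N} \to V$ is a surjective linear map, every fiber $P_V^{-1}(v)$ has the same size $N/k$. Consequently, for $r$ uniform on $\F_p^{\log_p N}$, $P_V(r)$ is uniform on $V$. Independence of the $r_i$ then implies that $P_V(r_1), \ldots, P_V(r_{|Q|})$ are i.i.d.\ uniform on a set of size $k$, and $\{P_V(Q) \neq V\}$ is precisely the event that $|Q|$ uniform samples from a $k$-element set fail to cover it.

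Next, for each $v \in V$ I would let $E_v$ denote the event that no $P_V(r_i)$ equals $v$, so that $\{P_V(Q) \neq V\} = \bigcup_{v \in V} E_v$. A direct computation gives $\Pr[E_v] = (1-1/k)^{|Q|}$ and, for distinct $v, w \in V$, $\Pr[E_v \cap E_w] = (1-2/k)^{|Q|}$. The second Bonferroni inequality then yields
\[
\Pr[P_V(Q) \neq V] \;\geq\; k(1-1/k)^{|Q|} - \binom{k}{2}(1-2/k)^{|Q|}.
\]
Using the algebraic identity $(1-1/k)(1-1/(k-1)) = 1-2/k$ together with the loose estimate $\binom{k}{2} \leq k^2$, this lower bound is at least $k(1-1/k)^{|Q|}\bigl(1 - k(1-1/(k-1))^{|Q|}\bigr)$, which is exactly the quantity in the claim.

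There is no real obstacle in this argument; it is a routine piece of combinatorial probability. The only points requiring a little care are verifying the identity $(1-1/k)(1-1/(k-1)) = 1-2/k$ that enables the factorization, and accepting the mild slack incurred by replacing $\binom{k}{2}$ with $k^2$, which is precisely what produces the extra factor of $k$ inside the inner parenthesis of the stated lower bound.
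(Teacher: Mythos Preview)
Your proof is correct and essentially identical to the paper's: the paper also reduces to coupon collecting, applies the second Bonferroni inequality, and bounds the pairwise sum by $k^2(1-1/k)^{|Q|}(1-1/(k-1))^{|Q|}$. The only cosmetic difference is that the paper obtains the factored form $\Pr[E_v\cap E_w]=(1-1/k)^{|Q|}(1-1/(k-1))^{|Q|}$ directly by conditioning on $E_v$, whereas you compute $(1-2/k)^{|Q|}$ and then invoke the identity $(1-1/k)(1-1/(k-1))=1-2/k$.
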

\begin{proof}
Note that for $a, b \in V$ such that $a$ and $b$ are distinct,
\begin{align*}
\Pr[a \not\in P_V(Q) \text{ and } b \not\in P_V(Q)] 
&=
\Pr[a \not\in P_V(Q)]\Pr[b \not\in P_V(Q) \mid a \not\in P_V(Q)] \\
&= 
\left(1-\frac{1}{k}\right)^{|Q|}\left(1-\frac{1}{k-1}\right)^{|Q|}
\end{align*}
where the last equality follows by noting that if $a$ is not contained in $P_V(Q)$, the projection of each $r \in Q$ onto $V$ is a uniform random variable over $V \backslash \{a\}$ which has size $k-1$.
By the principle of inclusion-exclusion,
\begin{align*}
\Pr\left[P_V(Q) \neq V\right] 
&\geq 
\sum_{a \in  V}\Pr[a \not\in P_V(Q)] - \sum_{a, b \in  V}\Pr[a \not\in P_V(Q) \text{ and } b \not\in P_V(Q)] \\
&\geq k\left(1-\frac{1}{k}\right)^{|Q|}-k^2\left(1-\frac{1}{k}\right)^{|Q|}\left(1-\frac{1}{k-1}\right)^{|Q|}
\end{align*}
\end{proof}

The following claim gives a bound on the probability that for two subspaces $V_1$ and $V_2$, $Q$ does not shatter either, assuming that their intersection has small dimension.
This bound will be close to the bound in the case that the intersection $V_1$ and $V_2$ is $\{0\}$, i.e. the event that $Q$ shatters $V_1$ is independent of the event that $Q$ shatters $V_2$.

\begin{claim}\label{claim:intprob}
Let $V_1, V_2 \subseteq \F_p^{\log_p{N}}$ be two subspaces of dimension $\log_p{k}$ so that $\dim(V_1 \cap V_2) = m$ and let $P_{V_1}: \F_p^{\log_p{N}} \rightarrow V_1$ and $P_{V_2}: \F_p^{\log_p{N}} \rightarrow V_2$  be the projections of $\F_p^{\log_p{N}}$ onto $V_1$ and $V_2$ respectively.
Let $Q = (r_1, \ldots, r_{|Q|})$ be a random sequence where $r_1, \ldots, r_{|Q|}$ are independent and identically distributed random variables uniform over $\F_p^{\log_p{N}}$.
Then
\[
\Pr\left[P_{V_1}(Q) \neq V_1 \text{ and  }P_{V_2}(Q) \neq V_2\right] \leq
 k^2 \left(1-\frac{1}{k}\right)^{|Q|} \exp\left(-|Q|\frac{k-p^m}{(k-1)k}\right)
\]
\end{claim}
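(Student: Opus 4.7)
The plan is to reduce to a single-draw joint probability by union-bounding over pairs $(a,b) \in V_1 \times V_2$, in the same spirit as Claim~\ref{claim:indprob}. By independence of $r_1, \ldots, r_{|Q|}$,
\[
\Pr[a \notin P_{V_1}(Q) \text{ and } b \notin P_{V_2}(Q)] = \Pr[P_{V_1}(r) \neq a \text{ and } P_{V_2}(r) \neq b]^{|Q|}
\]
for a single $r$ uniform on $\F_p^{\log_p N}$, so once I bound the single-draw probability it remains only to take the $|Q|$-th power and sum over the $k^2$ pairs.

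For the single-draw bound, the crucial step is to observe that $\{r : P_{V_1}(r) = a\}$ and $\{r : P_{V_2}(r) = b\}$ are cosets of $V_1^\perp$ and $V_2^\perp$ respectively. Their intersection is either empty (when $a$ and $b$ disagree on $V_1 \cap V_2$) or a coset of $V_1^\perp \cap V_2^\perp = (V_1+V_2)^\perp$, which, using $\dim(V_1+V_2) = 2\log_p k - m$, has size $p^m N / k^2$. Hence $\Pr[P_{V_1}(r) = a,\, P_{V_2}(r) = b] \leq p^m/k^2$ in all cases, and inclusion--exclusion with the marginal probabilities $1/k$ gives
\[
\Pr[P_{V_1}(r) \neq a \text{ and } P_{V_2}(r) \neq b] \leq 1 - \frac{2}{k} + \frac{p^m}{k^2}.
\]

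The rest is a short algebraic manipulation. I would verify the factorization
\[
1 - \frac{2}{k} + \frac{p^m}{k^2} = \left(1 - \frac{1}{k}\right)\left(1 - \frac{k - p^m}{k(k-1)}\right),
\]
apply $1 - x \leq e^{-x}$ to the second factor, and raise to the $|Q|$-th power. Summing the result over the $k^2$ pairs $(a,b) \in V_1 \times V_2$ then produces exactly the target bound.

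I do not anticipate any serious obstacle. The only step with real content is locating the $p^m$ factor through the size of the coset intersection; the remainder closely parallels Claim~\ref{claim:indprob}, with the simplification that only the first-order inclusion--exclusion term is needed because we only require an upper bound.
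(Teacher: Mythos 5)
Your proposal is correct and reaches exactly the claimed bound. At the top level it follows the same route as the paper: both proofs union-bound over the $k^2$ pairs $(a,b)\in V_1\times V_2$ and then exploit the joint distribution of $(P_{V_1}(r),P_{V_2}(r))$ for a single uniform draw $r$, which is where $\dim(V_1\cap V_2)=m$ enters. Where you differ is in how the per-pair term is handled. The paper factors it as $\Pr[a\notin P_{V_1}(Q)]\cdot\Pr[b\notin P_{V_2}(Q)\mid a\notin P_{V_1}(Q)]$, works out the conditional law of $P_{V_2}(r)$ given $P_{V_1}(r)\neq a$ with a case split on whether $a$ and $b$ agree on $V_1\cap V_2$, and then averages the two resulting exponentials over the $k/p^m$ versus $k-k/p^m$ choices of $b$. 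You instead compute the single-draw probability $\Pr[P_{V_1}(r)\neq a \text{ and } P_{V_2}(r)\neq b]$ directly by inclusion--exclusion, bounding the joint atom by $p^m/k^2$ via the size of the coset intersection, and then use the factorization $1-\tfrac{2}{k}+\tfrac{p^m}{k^2}=\bigl(1-\tfrac{1}{k}\bigr)\bigl(1-\tfrac{k-p^m}{k(k-1)}\bigr)$ (which is correct) before applying $1-x\leq e^{-x}$. This gives a uniform bound for every pair rather than an average over $b$, and it cleanly sidesteps the conditional-distribution bookkeeping. One small caution: your identification of the fibers of $P_{V_i}$ with cosets of $V_i^{\perp}$, and hence $|{\ker P_{V_1}\cap\ker P_{V_2}}|=p^mN/k^2$, depends on the projections being the ones induced by the bilinear form (for an arbitrary linear projection onto $V_i$ the kernels could even coincide, breaking the $p^m/k^2$ bound); the paper is equally implicit about this, and the reading consistent with Claim~\ref{claim:rowsequal} does support your coset computation, so this is not a gap in context.
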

\begin{proof}
The left-hand side is bounded above by
\begin{align}
\sum_{a \in V_1, b \in V_2} &\Pr[a \not\in P_{V_1}(Q) \text{ and } b\not\in P_{V_2}(Q)] \nonumber \\
&=
\sum_{a \in V_1, b \in V_2} \Pr[a \not\in P_{V_1}(Q)]\Pr[ b\not\in P_{V_2}(Q) \mid a \not\in P_{V_1}(Q)] \nonumber \\
&= \sum_{a \in V_1, b \in V_2} \left(1-\frac{1}{k}\right)^{|Q|}\Pr[ b\not\in P_{V_2}(Q) \mid a \not\in P_{V_1}(Q)] \label{eq:intbound}
\end{align}
where the last equality follows from the fact that $V_1$ has $k$ vectors, and the size of $P_{V_1}^{-1}(v)$ is the same for all $v \in V_1$.

Given that $a$ is not contained within $P_{V_1}(Q)$, we can assume that for each $r_i \in Q$, $P_{V_1}(r_i)$ is uniformly distributed amongst all of the $k-1$ elements in $V_1 \backslash \{a\}$.

Let $P_{V_1 \cap V_2}: \F_p^{\log_p{N}} \rightarrow V_1\cap V_2$ be the projection of $\F_p^{\log_p{N}}$ onto $V_1 \cap V_2$.
If $P_{V_1 \cap V_2}(a) = P_{V_1 \cap V_2}(b)$, then of the now $k-1$ choices for $P_{V_1}(r_i)$ for each $r_i$, there are $k/p^m-1$ that can lead to $b = P_{V_2}(r_i)$.
Conditioned on any of the $k/p^m-1$ choices, the probability that in fact $P_{V_2}(r_i)$ is $b$ is equal to $p^m/k$.
Thus,
\[
\Pr[ b\not\in P_{V_2}(Q) \mid a \not\in P_{V_1}(Q)] = \left(1-\frac{k-p^m}{(k-1)p^m}\cdot \frac{p^m}{k} \right) =  \left(1-\frac{k-p^m}{(k-1)k}\right) 
\]
If $P_{V_1 \cap V_2}(a) \neq P_{V_1 \cap V_2}(b)$, then of the now  $k-1$ choices for $P_{V_1}(r_i)$ for each $r_i$, there are $k/p^m$ that can lead to $b = P_{V_2}(r_i)$.
Again, conditioned on any of the $k/p^m$ choices, the probability that $P_{V_2}(r_i)$ is $b$ is equal to $p^m/k$.
Thus,
\[
\Pr[ b\not\in P_{V_2}(Q) \mid a \not\in P_{V_1}(Q)] = \left(1-\frac{k}{(k-1)p^m}\cdot \frac{p^m}{k} \right) =  \left(1-\frac{1}{k-1}\right) 
\]

For every $a$, there are $k/2^m$ possible $b$ such that $a_v = b_v$ for all $v \in V_1 \cap V_2$.
Thus for each $a$,
\begin{align*}
 \sum_{ b \in V_2} \Pr[ b\not\in P_{V_2}(Q) \mid a \not\in P_{V_1}(Q)] 
&= 
\frac{k}{p^m} \left(1-\frac{k-p^m}{(k-1)k}\right)^{|Q|} + \frac{kp^m- k}{p^m}  \left(1-\frac{1}{k-1}\right)^{|Q|} \\
&\leq
\frac{k}{p^m} \exp\left(-|Q|\frac{k-p^m}{(k-1)k}\right) + \frac{kp^m- k}{p^m}  \exp\left(-|Q|\frac{1}{k-1}\right) \\
&=
k\left(\frac{1}{p^m} \exp\left(-|Q|\frac{k-p^m}{(k-1)k}\right) + \frac{p^m- 1}{2^m}  \exp\left(-|Q|\frac{1}{k-1}\right)\right) \\
&\leq
k \exp\left(-|Q|\frac{k-p^m}{(k-1)k}\right).
\end{align*}

Finally, we can plug the above into Eq.~\eqref{eq:intbound} to obtain

\[
\sum_{a \in V_1} k \left(1-\frac{1}{k}\right)^{|Q|}\exp\left(-|Q|\frac{k-p^m}{(k-1)k}\right) = k^2\left(1-\frac{1}{k}\right)^{|Q|}\exp\left(-|Q|\frac{k-p^m}{(k-1)k}\right)
\]
as desired.
\end{proof}

The following claim shows that there exist many subspaces of $\F_p^{\log_p{N}}$ so that every pair of subspaces has small intersection.

\begin{claim}\label{claim:manysub}
If $k \leq N/2$, then there exists a collection of subspaces $\mathcal{V} = \{V_1, \ldots, V_{\ell}\}$ of $\F_p^{\log_p{N}}$ where
\[
\ell =  \left(\frac{N}{2k^3}\right)^{\log_p{k}/2+1},\] so that $\dim(V_i) = \log_p{k}$ for all $i$, and $\dim(V_i \cap V_j) \leq \log_p{k}/2$.
\end{claim}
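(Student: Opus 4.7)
The plan is to build $\mathcal{V}$ greedily, using standard counting bounds on $d$-dimensional subspaces of $\F_p^n$, where I set $d = \log_p k$ and $n = \log_p N$. I may assume that $k$ is a power of $p$ and that $2k^3 < N$, since otherwise $\ell \leq 1$ and the claim is trivial.

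Suppose that $V_1, \ldots, V_i$ have been chosen so that every pairwise intersection has dimension at most $d/2$. Call a $d$-dimensional subspace $V$ \emph{excluded} by $V_j$ if $\dim(V \cap V_j) \geq \lfloor d/2\rfloor + 1$, equivalently if $V$ contains some $(\lfloor d/2\rfloor + 1)$-dimensional subspace of $V_j$. Writing $m = \lfloor d/2\rfloor + 1$, a union bound over the $m$-dimensional subspaces of $V_j$, combined with the standard Gaussian-binomial estimate $\binom{a}{b}_p \leq p^{b(a-b)} \cdot O_p(1)$, gives
\[
\#\{V \text{ excluded by } V_j\} \;\leq\; \binom{d}{m}_p \binom{n-m}{d-m}_p \;\leq\; p^{(d-m)(m + n - d)} \cdot O_p(1).
\]
Since the total number of $d$-dimensional subspaces of $\F_p^n$ is at least $p^{d(n-d)}$, as long as $i \cdot p^{(d-m)(m + n - d)} \cdot O_p(1) < p^{d(n-d)}$, some $V$ avoids all exclusions and can be added as $V_{i+1}$.

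It remains to verify that this condition holds for every $i < \ell$. Taking $\log_p$, the above permits $\log_p i$ up to roughly $(d/2+1)n - 3d^2/4 - d$, whereas
\[
\log_p \ell \;=\; (d/2+1)(n - 3d - \log_p 2) \;=\; (d/2+1)n - \tfrac{3d^2}{2} - 3d - (d/2+1)\log_p 2.
\]
The gap between these two expressions is $\tfrac{3d^2}{4} + 2d + (d/2+1)\log_p 2$, positive and quadratic in $d$, which more than absorbs the $O_p(1)$ factors.

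The main things to watch are the integer-parity issue when $d$ is odd (just use $\lfloor d/2\rfloor$ throughout) and the trivial edge case $2k^3 \geq N$; beyond that, this is a routine double-counting argument, and the true ``obstacle'' is only keeping the bookkeeping with the Gaussian binomial bounds tight enough to hit the stated value of $\ell$. A fully random/probabilistic construction followed by deletion would give an equivalent bound and could be substituted if one preferred.
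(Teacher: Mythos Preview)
Your greedy-counting argument is essentially the paper's own proof: both compare the total number of $d$-dimensional subspaces of $\F_p^n$ against the number of $d$-subspaces meeting a fixed one in dimension exceeding $d/2$, and extract $\ell$ from the same $A/B$ pigeonhole; you simply phrase the counts with Gaussian binomials and the bound $\binom{a}{b}_p \le O_p(1)\,p^{b(a-b)}$ where the paper writes out the products explicitly. The parity wrinkle you flag for odd $d$ is likewise present (and equally unaddressed) in the paper's version, which tacitly treats $\log_p k$ as even.
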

\begin{proof}
The number of subspaces of $\F_p^{\log_p{N}}$ of size $k$ is equal to
\[
A = \frac{(N-1)(N-p)(N-p^2)\cdots(N-k/p)}{(k-1)(k-p)(k-p^2)\cdots (k-k/p)}.
\]
This can been seen by first counting the number of sets of $k$ linearly independent vectors, and dividing by the number of bases of a given subspace of dimension $k$.

Fix a subspace $V$.
The number of subspaces $V'$ of $\F_p^{\log_p{N}}$ so that $\dim(V \cap V') > \log_p{k}/2$ is equal to
\begin{align*}
B &= \sum_{i=\log_p{k}/2+1}^{\log_p{k}} \frac{(k-1)(k-p)(k-p^2)\cdots (k-p^{i-1})}{(p^i-1)(p^i-2)(p^i-4)\cdots (p^i-p^{i-1})} \cdot
\frac{(N-k)(N-pk)(N-p^2k)\cdots(N-k^2/p^{i+1})}{(k/p^i-1)(k/p^i-2)(k/p^i-4)\cdots (k/p^i-k/p^{i+1})} \\
&\leq \log{k} \cdot \frac{(k-1)(k-p)\cdots (k-k^{1/2})}{(pk^{1/2}-1)(pk^{1/2}-2)\cdots (pk^{1/2}-k^{1/2})} \cdot
\frac{(N-k)(N-pk)\cdots(N-k^{3/2}/p)}{(k^{1/2}/p-1)(k^{1/2}/p-2) \cdots (k^{1/2}/p-k^{1/2}/p^2)}
\end{align*}
This follows by computing for each $i$, the number of subspaces in $V$ with dimension $i$, and the number of subspaces in $\F_p^{\log_p{N}} \backslash V$ with dimension $\log_p{k}-i$.
Thus, we can let $\ell$ be any integer less than $A/B$, and as long as $k \leq N/2$
\[
\frac{A}{B} \geq \frac{(N-k)^{\log_p{k}/2+1}}{k^{3(\log_p{k}/2+1)}} \geq \left(\frac{N}{2k^3}\right)^{\log_p{k}/2+1}
\]
as desired.
\end{proof}

Finally, we prove Theorem~\ref{lem:main}.

\begin{proof}[ of Theorem~\ref{lem:main}]
Let $\{V_1, \ldots, V_{\ell}\}$ be a subset of subspaces from Claim~\ref{claim:manysub}, where $\ell = \exp(|Q|/(k+\sqrt{k}))/(e k)$.
This is possible as long as $|Q| \leq (k+\sqrt{k})((\log_p{k}/2+1)(\log{N}-3\log{2k})+1+\log{k})$ because the total number of subspaces from Claim~\ref{claim:manysub} is $(N/(2k^3))^{\log_p{k}/2+1}$.
Additionally, for each $i \in [\ell]$, let $P_{V_i}: \F_p^{\log_p{N}} \rightarrow V_i$ be the projection of $\F_p^{\log_p{N}}$ onto $V_i$.

If for any $i \in [\ell]$, we have that $Q$ does not shatter $V_i$, then by Claim~\ref{claim:rowsequal}, $A_{Q}$ does not satisfy the restricted isometry property.
By the principle of inclusion-exclusion, and Claims~\ref{claim:indprob} and~\ref{claim:intprob},
\begin{align*}
\Pr&\left[\bigvee_{i \in [\ell]}P_{V_i}(Q) \neq V_i \right] \\
&\geq 
\sum_{i \in [\ell]} \Pr[P_{V_i}(Q) \neq V_i] - \sum_{i, j \in [\ell]} \Pr[P_{V_i}(Q) \neq V_i \text{ and } P_{V_j}(Q) \neq V_j] \\
&\geq 
\ell  k\left(1-\frac{1}{k}\right)^{|Q|}\left(1-k\left(1-\frac{1}{k-1}\right)^{|Q|}\right) - \ell^2  k^2 \left(1-\frac{1}{k}\right)^{|Q|} \exp\left(-|Q|\frac{k-\sqrt{k}}{(k-1)k}\right) \\
&=  \ell k \left(1-\frac{1}{k}\right)^{|Q|}\left(1-k\left(1-\frac{1}{k-1}\right)^{|Q|}-\ell  k \exp\left(-|Q|\frac{1}{k+\sqrt{k}}\right)\right). \\
&= C \exp\left(|Q|\frac{1}{k+\sqrt{k}}-1\right) \left(1-\frac{1}{k}\right)^{|Q|} \\
&\geq C \exp\left(|Q|\frac{1}{k+\sqrt{k}}-|Q|\frac{1}{k-\sqrt{k}}-1\right) \\
&= C \exp\left(-|Q|\frac{2\sqrt{k}}{k^2-k}-1\right)
\end{align*}
where the second equality follows by our choice of $\ell$, and the subsequent inequality follows from the fact that $1-1/k \geq \exp(-1/(k-\sqrt{k}))$ for $k \geq 1$.
\end{proof}

\begin{remark}
Note that if $k = \Omega(\log(N)^2)$, then the lower bound in Theorem~\ref{lem:main} is a constant, but one bounded above by $C/e$.
Let this constant be $C_1$.
We can obtain lower bounds approaching $1$ with just a constant factor decrease in $|Q|$ using the following argument, keeping in mind that the lower bound in Theorem~\ref{lem:main} is also a lower bound on the probability that $Q$ does not shatter every subspace of dimension $\log_p{k}$.

Consider $s$ independent random sequences $Q_1, \ldots, Q_s$ of equal size so that $|Q_1|+\cdots+|Q_s| \leq  (k+\sqrt{k})((\log_p{k}/2+1)(\log{N}-\log{2k})+\log{k})$.
The probability that ${Q_1, Q_2, \ldots, Q_s}$ does not shatter every subspace of dimension $\log_p{k}$ is at least $C_1$.
This event implies that ${Q_i}$ does not shatter every subspace of dimension $\log_p{k}$ for all $i$.
Because these events are independent, the probability that $Q_i$ shatters every subspace of dimension $\log_p{k}$, and thus does not satisfy the restricted isometry property for any fixed $i$ is $C_1^{1/s}$.
\end{remark}

\subsection*{Acknowledgments} I would like to thank Oded Regev and Ishay Haviv for their valuable comments.

\bibliographystyle{alphaabbrv}
\bibliography{riplb}
\end{document}